\documentclass[a4paper]{ifacconf}

\usepackage{amsmath, amssymb, amsfonts}
\usepackage{bbm}
\allowdisplaybreaks
\usepackage{graphicx}      
\usepackage{natbib}        
\usepackage{subcaption}
\usepackage{tikz}
\usetikzlibrary{arrows.meta}
\usepackage{xcolor}
\definecolor{dgreen}{rgb}{0.,0.8,0.}

\newcommand{\qedsymbol}{$\blacksquare$}

\newtheorem{lemma}[thm]{Lemma}
\newtheorem{proposition}[thm]{Proposition}
\newtheorem{corollary}[thm]{Corollary}

\newtheorem{assumption}[thm]{Assumption}
\newtheorem{remark}[thm]{Remark}

\newenvironment{proof}[1][Proof]{%
  \par\noindent\textbf{#1. }\rmfamily
}{\hfill\qedsymbol\par}
\newcounter{example}
\renewcommand{\theexample}{\arabic{example}} 
\newenvironment{example}[1][Example]{%
  \refstepcounter{example}%
  \par\noindent\textbf{#1~\theexample. }\rmfamily
}{\hfill\qedsymbol\par}

\newcommand*{\dif}{\mathop{}\!\mathrm{d}}
\newcommand{\reals}{\mathbb{R}}
\newcommand{\naturals}{\mathbb{N}}
\newcommand{\D}{\mathbb{D}}
\newcommand{\diag}{\mathrm{diag}}

\newcommand{\sbs}[1]{_{\scriptscriptstyle \mathrm{#1}}}

\begin{document}
\begin{frontmatter}

\title{Optimal Control Synthesis of Closed-Loop Recommendation Systems over Social Networks%
\thanksref{footnoteinfo}} 

\thanks[footnoteinfo]{This work has been supported in part by ANR project Feeding Bias (ANR-22-CE380017-01).}

\author[First,Second]{Simone Mariano} 
\author[First]{Paolo Frasca} 

\address[First]{Univ.\ Grenoble Alpes, CNRS, Inria, Grenoble INP, GIPSA-lab, 38000 Grenoble, France (e-mail: simone.mariano@gipsa-lab.fr, paolo.frasca@gipsa-lab.fr).}
\address[Second]{Univ.\ Grenoble Alpes, CNRS, Sciences Po Grenoble-UGA, Pacte, 38000 Grenoble, France.}

\begin{abstract}
This paper addresses the problem of designing recommendation systems for social networks and e-commerce platforms from a control-theoretic perspective. We model the design of recommendation systems as a state-feedback infinite-horizon optimal control problem with a performance index that (i) rewards alignment/engagement, (ii) penalizes polarization and large deviations from an uncontrolled baseline, and (iii) regularizes exposure across neighboring users. 
The recommendation entries are fed to the platform users, who are assumed to follow a networked, multi-topic, continuous-time opinion dynamics. 
We show that the designed control yields a stabilizing recommendation system under simple algebraic/spectral conditions
on the weights that encode the platform’s preference for engagement, stability of preferences, polarization, and cross-user diversity. Conversely, we show that when ill-posed weights are selected in the optimal control problem (namely, when engagement is excessively rewarded), then the closed-loop system can exhibit destabilizing, pathological behaviors that conflict with the design objectives.
\end{abstract}

\begin{keyword}
Social networks and opinion dynamics; Control of networks; Applications of optimal control; Recommendation systems;
\end{keyword}

\end{frontmatter}
\section{Introduction}
In social networking and e-commerce platforms, personalized suggestions that optimize short-term engagement interact with the confirmation bias of the users  and can produce various ``information disorders'', such as extreme, polarizing opinions and echo chambers \citep{gausen2022abm,huszar2022algorithmic,castaldo2022junk}. 
Previous work has suggested that recommendation systems, when tuned purely for engagement, steer users toward extreme positions, and that extreme positions, in turn, increase engagement. In the single-user model by~\cite{FrascaRecommendation2022}, this mechanism is formalized in a closed loop, providing both a baseline mechanism and a benchmark phenomenon. Related theory confirms that, without sustained exploration, positions can drift to extremes~\citep{jiang2019degenerate}, and repeated learning on self-influenced logs can reduce diversity and long-run utility~\citep{chaney2018confounding}.

Despite evidence of unwanted effects of recommendations, mitigation is typically addressed only after deployment. The recent survey by~\cite{dean2024accounting} argues that most recommendation pipelines are designed without an explicit model of how users and algorithms shape one another and highlights several structural issues: memory-less architectures \citep{covington2016youtube}; simplified or omitted user dynamics and creator adaptation; and optimization aggressively oriented toward engagement \citep{chen2019topk,immorlica2024clickbait}. Mitigation is then performed {\it a posteriori} on logged data, which might obscure causal effects and favor myopic, symptom-level fixes \citep{sinha2016deconvolving,chaney2018confounding}, in turn degrading platform performance even more and inducing user- and society-level harms, including shifts in exposure composition, reduced content diversity, and polarization  \citep{mansoury2020feedback,nguyen2014exploring}.

In this paper, we aim to overcome these limitations by treating the design of recommendation systems as an infinite-horizon optimal control problem posed at the level of the user-platform interaction. We begin from a performance index that makes explicit the design trade-offs, since it (i) rewards alignment between recommendations and user opinions as a proxy for engagement, (ii) penalizes polarization and large deviations from an uncontrolled baseline, and (iii) regularizes exposure across neighboring users in the social graph. These terms define a family of weights that encode the platform’s normative choices about engagement, stability of preferences, polarization, and cross-user diversity. 


The recommendations then enter, as control input, a continuous-time opinion dynamics model that encompasses key aspects of the platform’s behavior. Our analysis adopts a multi-topic opinion-formation model, akin to~\cite{friedkin2016network}, in the continuous-time version proposed by \cite{ye2020continuous}. Social interactions are represented by a graph Laplacian, the inter-topic logic is captured by a suitable coupling matrix, and an anchoring term models the retention of the inner beliefs. 

Our analysis is model-based and yields explicit algebraic and spectral conditions under which the closed-loop recommendation–user system is stable and admits a unique equilibrium with a transparent dependence on the design weights. 
Conversely, when ill-posed weights are selected, e.g., when engagement is overemphasized relative to regularization and polarization penalties, we prove that the same closed-loop architecture can exhibit pathological behaviors that conflict with the design objectives, such as unbounded growth of opinions or amplification of disagreement, or even the absence of an optimal recommendation input. In this way, the control-theoretic viewpoint both clarifies when recommendation objectives are well-posed and provides tools to design interaction mechanisms with provable guarantees, upstream of deployment and in line with the interaction-focused agenda of~\cite{dean2024accounting}.

The remainder of the paper is organized as follows. In Section~\ref{sec:pi} we introduce the performance index and formulate the infinite-horizon optimal control problem used to generate the recommendation inputs. Section~\ref{sec:model} presents the model of the opinion dynamics of the users. In Section~\ref{sec:sols_gen}, we analyze the resulting LQ problem and characterize how the selection of the weights in the performance index affects the solution of the LQ problem. Section~\ref{sec:problems} highlights the pathological behaviors that can arise when the design weights in the performance index are chosen inappropriately. Finally, Section~\ref{sec:disc} discusses the modeling and design choices, summarizes the main findings and their limitations, and outlines possible extensions.

\noindent\textbf{Notation:} $\reals$ and $\naturals$ denote the sets of real and natural numbers, respectively.
For $n\in\naturals$, $\mathbf{1}_n$ is the $n$-dimensional vector of ones, and 
$[n]:=\{1,\dots,n\}$. $I_n$ and $O_n$, $n\in\naturals$, denote identity and zero matrices of appropriate dimensions. For $A,B \in \reals^{n \times m}$, $n,m \in \naturals$, $A \odot B$ denotes their Hadamard (elementwise) product, i.e., $(A \odot B)_{ij} := A_{ij} B_{ij}$ for all $i\in [n]$ and $j\in [m]$.
For $A,B\in\reals^{n \times n}$, $ n\in \naturals$, the relations $A\succeq B$ and $A\preceq B$ between symmetric matrices denote the
Loewner order, i.e., $A-B$ is positive semidefinite and negative semidefinite, respectively, while $A \succ 0$ and $A \prec 0$ ($A \succeq 0$  and $A \preceq 0$) denote (semi)positive and (semi)negative definite matrices, respectively.
For $n\in\naturals$, $\mathbb{D}^n$ is the set of real diagonal matrices with $n$ diagonal entries, while
$\mathbb{D}_{\succ0}^n$ (resp.\ $\mathbb{D}_{\succeq0}^n$) is the set of diagonal positive
definite (positive semidefinite) matrices with $n$ diagonal entries.
For a diagonal matrix $W_{(\cdot)}=\mathrm{diag}(w_{(\cdot),1},\dots,w_{(\cdot),nm})$, where $(\cdot)$ is a suitable label,
we write
$\lambda_{m,(\cdot)}:=\min_{i\in\{1,\dots,nm\}} w_{(\cdot),i}$ and
$\lambda_{M,(\cdot)}:=\max_{i\in\{1,\dots,nm\}} w_{(\cdot),i}$,
where the subscripts $m$ and $M$ denote minimum and maximum elements, respectively.
For a complex number $\lambda$, $\Re(\lambda)$ denotes its real part, and for a matrix $M$,
$\sigma(M)$ denotes its spectrum (set of eigenvalues).



\section{Performance index and design goal}
\label{sec:pi}
The first objective of this work is to clearly define a performance index that quantifies the distortion and polarization-versus-engagement issue identified in \cite{FrascaRecommendation2022} in a networked, multi-topic setting. The guiding principle in designing the recommendation system is simple: engagement should be rewarded, but only to the extent that it does not generate pathological dynamics. 

Consider a set of $n \in \naturals$ users holding opinions on $m \in \naturals$ topics, connected via a directed, weighted and connected graph $\mathcal{G}(\mathcal{E},\mathcal{V})$, where $\mathcal{E}$ and $\mathcal{V}$ are, respectively, the edge and vertex set, and with associated Laplacian matrix $L\in \reals^{n\times n}$, with $L=\Delta-\Gamma$, where $\Gamma\in\mathbb{R}^{n\times n}$ is the adjacency matrix and $\Delta= \diag(\Gamma \mathrm \mathbf{1}_n) $ is the degree matrix.
For any time $t\ge 0$, let $x(t)\in \reals^{nm}$ be the vector of opinions and $u(t)\in \reals^{nm}$ be the vector of the inputs provided by the recommendation systems. The entries $x_{(k-1)\cdot n + i}$ and $u_{(k-1)\cdot n + i}$, $k\in\{1, \dots,m\}$, $i\in\{1, \dots,n\}$, of $x$ and $u$ correspond to the opinion of the $i$-th user on the $k$-th topics and its corresponding input, respectively. 

The trade-offs that have been mentioned in the Introduction will be precisely quantified through the integral cost 
\begin{align}
J(x(t),u(t))\nonumber 
:= &\int_{0}^{\infty}\big(-J\sbs{EN}+J\sbs{P}+J\sbs{D}+J\sbs{EX}+J\sbs{F}\big)\,\dif t \nonumber \\
=&\int_{0}^{\infty} \ell(x(t),u(t))\,\dif t,
\label{eq:J}
\end{align} 
whose different terms correspond to the various objectives of the recommendation system.

The first term
\begin{equation*}
J\sbs{EN}:=x^{\top} W\sbs{EN} u,
\label{eq:J_EN}
\end{equation*}
$W\sbs{EN} \!  \in \! \mathbb{D}_{\succeq0}^{nm}$, models the user engagement by rewarding the alignment between user opinion and recommendation, defined as a weighted scalar product between vectors in $\reals^{nm}$, consistently with \cite{FrascaRecommendation2022,sprenger2024control,chandrasekaran2026network}.
Notice that by convention, the index is minimized, and hence engagement appears with a negative sign.
%

The term
\begin{equation*}
J\sbs{P}:=x^T W\sbs{P}x ,
\label{eq:J_P}
\end{equation*}
$W\sbs{P} \!  \in \! \mathbb{D}_{\succeq0}^{nm}$, penalizes the polarization of opinions, as done in network-aware designs and user-based evaluations~\citep{gausen2022abm,chandrasekaran2026network}. 

The term
\begin{equation*}
J\sbs{D}:=(x-x_{\mathrm{eq}})^T W\sbs{D}(x-x_{\mathrm{eq}}),
\label{eq:J_D}
\end{equation*}
$W\sbs{D} \!  \in \! \mathbb{D}_{\succeq0}^{nm}$ captures how much opinions deviate from the uncontrolled equilibrium  $x_{\mathrm{eq}}\in \reals^{nm}$, which generalizes the inner belief idea in~\cite{FrascaRecommendation2022} and \cite{friedkin2016network} and aims to preserve the alignment of the users' inner beliefs and its expressed opinion on a given topic. 

The term
\begin{equation*}
J\sbs{EX}:=u^T W\sbs{EX}u,
\label{eq:J_EX}
\end{equation*}
$W\sbs{EX} \!  \in \! \mathbb{D}_{\succ0}^{nm}$, captures the effort of the control and has the scope of limiting excessively strong or frequent inputs to avoid overexposure of the users in the social platform \citep{qin2024too}. 

Finally, the term 
\begin{equation*}
J\sbs{F}:=\alpha\sbs{F}\,u^\top L_u u,
\qquad
L_u:=I_m\otimes L_b, \qquad \alpha\sbs{F}\geq 0,
\end{equation*}
with $L_b:=\Delta_b-\Gamma_b$,  $\Gamma_b:=\tfrac{\Gamma+\Gamma^\top}{2}$ and $\Delta_b:=\operatorname{diag}(\Gamma_b\mathbf{1})$, is used to mimic collaborative filtering by regularizing exposure across neighboring users and robustifies the design without imposing hard constraints. 
This design choice is justified under the widely-supported assumption that users who interact with one another tend to share similar preferences and opinions \citep{mcpherson2001birds}. 

\begin{figure}[tbp]
  \centering
  \raisebox{0.75ex}{\scalebox{1.20}{\begin{tikzpicture}[
      >=Latex, line width=0.8pt, line join=miter, line cap=butt,
      every node/.style={font=\scriptsize},
      block/.style={draw, thick, align=center,
                    minimum width=2.2cm, minimum height=0.9cm, inner sep=2pt}
  ]
    \node[block] (user) at (0,1.4)   {User dynamics\\ $f(x(t),u(t))$};
    \node[block] (rec)  at (0,0)     {Recommendation system\\$\pi(x(t),u(t))$};

    \draw[->] (rec.north) -- node[right,pos=0.55] {$u$} (user.south);

    \path (user.east) ++(0, 0.20) coordinate (yout); 
    \path (user.east) ++(0,-0.20) coordinate (xfb);  

    \draw[->] (yout) -- ++(0.9,0) node[above,pos=0.15] {$ \qquad x \odot u$};

    \draw[->] (xfb) -- ++(0.9,0) node[below,pos=0.30] {$x$}
               |- (rec.east);
  \end{tikzpicture}}}
  \caption{Representation of user-recommendation system feedback loop.}
  \label{fig:sys}
\end{figure}

The design goal is to define a recommendation system that selects the appropriate $u$ so that the population follows a well-behaved trajectory and settles near a desirable steady regime, while minimizing~\eqref{eq:J}, see Figure~\ref{fig:sys}. This objective translates into the infinite-horizon optimal control problem
\begin{align}
   \min_{u}\quad &  J=\int_{0}^{\infty} \ell(x(t),u(t))\,\dif t, \nonumber
  \\
  \mathrm{s.t.}\quad & \dot x = f(x,u) ,\ \ x(0)=x_0.
\label{eq:prob_full}
\end{align}

In what follows, we (i) propose a tractable yet expressive model of the opinion dynamics, (ii) show how poor parameter choices and mis-specified objectives can induce undesirable opinion dynamics, and (iii) present conditions under which the closed-loop recommendation–user system is stable and admits a unique equilibrium.

\section{Opinion dynamics model}
\label{sec:model}
To define the users' dynamics, we follow the models presented in \cite{friedkin2016network,ye2020continuous}. Let $X \in \reals^{n\times m}$ be the opinion matrix, where $X_{ik}$ denotes the opinion of user $i\in\{1,\dots,n\}$ on topic $k\in\{1,\dots,m\}$. The continuous-time dynamics are
\begin{align}
    \dot{X}(t) = - & L\,X(t) \;-\; A_a\big(X(t)-X_\circ\big) \nonumber \\
    &+\; \big(U(t)-X(t)\big) \;+\; \big(X(t)C^\top - X(t)\big),
\label{eq:model_matrix_noninv}
\end{align}
with the single opinion evolving by 
\begin{align*}
    \dot x_{ik}
:= & -\sum_{j=1}^{n}\big( L_{ij}\,x_{jk}(t)\big)
  - A_{a,ii}\big(x_{ik}(t)-X^\circ_{ik}\big) \nonumber  \\ 
  &+ (u_{ik}(t) - x_{ik}(t)) 
  + \Big(\sum_{h=1}^{m} x_{i h}(t)\,C_{kh} - x_{ik}(t)\Big). 
\end{align*}
Matrix $L\in\reals^{n\times n}$ is the Laplacian of $\mathcal{G}(\mathcal{E},\mathcal{V})$, which drives consensus among neighboring users (within each topic), $C\in\reals^{m\times m}$ captures inter-topic influence within each user's opinions, while $A_a\in\D^{n\times n}_{\succ0}$ is a diagonal anchoring matrix with  $X_\circ\in\reals^{n\times m}$ collecting anchoring opinions, namely, the inner beliefs of the users on a given topic. Finally, the matrix $U(t)\in\reals^{n\times m}$ is the input provided by the recommendation system and {\color{black} appears in the relative form $(U-X)$, so that aligned recommendations induce no artificial amplification}.
%
 %
The inter-topic matrix $C$ satisfies the following property, consistent with Assumption~1 in \cite{ye2020continuous}, which prevents instability of the uncontrolled system.
 

\begin{assumption}
\label{ass:C}
Matrix $C$ is such that $c_{ii} \geq 0$, for all $i \in [m]$, $|c_{ij}| \leq 1 $, for all $i, j \in [m]$, and given $A:=C-I$, then $0$ is a semisimple eigenvalue of $A$ with multiplicity $p\geq 1$, while for any 
$\lambda\in\sigma(A)$ such that $\lambda \neq 0$ then $\Re(\lambda)<0$.
\end{assumption}

Equation~\eqref{eq:model_matrix_noninv} extends \cite{friedkin2016network,ye2020continuous} by including a recommendation input $U$ that specifies a stance per user and topic. At the same time, it extends the single-user closed-loop model of \cite{FrascaRecommendation2022} to a networked, multi-topic, continuous-time setting. 

We now provide a convenient vectorized form of \eqref{eq:model_matrix_noninv}.  
Let $x:=\mathrm{vec}(X)\in\reals^{nm}$ and $u:=\mathrm{vec}(U)\in\reals^{nm}$. Using the identity $\mathrm{vec}(AXB)=(B^\top\otimes A)\mathrm{vec}(X)$, one obtains
\begin{align}
\label{eq:aff_dyn}
\dot{x}
:= &\Big[(C\otimes I_n) - (I_m\otimes(L+A_a)) - 2I_{nm}\Big]x \nonumber \\
\;&+\; (I_m\otimes A_a)\,\mathrm{vec}(X_\circ) \;+\; u  =   A_c x + d + u  =:  f(x,u),
\end{align}
with
\begin{equation}
A_c \! := \! (C\otimes I_n) \!-\! \big(I_m\!\otimes\!(L + A_a + 2I_n)\big),
\, 
d \! := \! (I_m\otimes A_a)\,\mathrm{vec}(X_\circ).
\label{eq:model_vec_comp}
\end{equation}
The uncontrolled equilibrium solves $A_{uc}x_{\mathrm{eq}}+d=0$ with
\begin{equation*}
A_{uc} := (C\otimes I_n) - \big(I_m\otimes(L + A_a + I_n)\big),
\end{equation*}
and
\begin{equation*}
x_{\mathrm{eq}} := -A_{uc}^{-1}d,
\end{equation*}
and where $A_{uc}$ is Hurwitz under Assumption~\ref{ass:C}, see Lemma~2 in \cite{ye2020continuous}. In particular, when $X_\circ$ is constant, the uncontrolled dynamics $\dot x = A_{uc}x + d$ are stable and converge to $x_{\mathrm{eq}}$. 
The next result also builds on  Lemma~2 in \cite{ye2020continuous}. 
\begin{proposition}
\label{prop:Ac_Hur}
Given Assumption~\ref{ass:C}, $A_c$ in \eqref{eq:model_vec_comp} is Hurwitz.
\end{proposition}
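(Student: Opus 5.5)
The plan is to compare $A_c$ with the matrix $A_{uc}$, which the paper already states is Hurwitz under Assumption~\ref{ass:C} by Lemma~2 in \cite{ye2020continuous}. From the definitions in \eqref{eq:model_vec_comp},
\begin{equation*}
A_c = (C\otimes I_n) - \big(I_m\otimes(L+A_a+I_n)\big) - I_{nm} = A_{uc} - I_{nm}.
\end{equation*}
Since $I_{nm}$ commutes with every matrix, the spectrum simply shifts: $\sigma(A_c)=\{\lambda-1:\lambda\in\sigma(A_{uc})\}$. Hence $\Re(\mu)=\Re(\lambda)-1<\Re(\lambda)<0$ for every $\mu\in\sigma(A_c)$, and $A_c$ is Hurwitz.

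If one prefers not to rely on the Hurwitz property of $A_{uc}$ as quoted, I would reproduce its argument directly. Write $A_c=(A\otimes I_n)-\big(I_m\otimes(L+A_a+I_n)\big)-I_{nm}$ with $A=C-I$. The matrix $M:=L+A_a+I_n$ is a nonsingular M-matrix: $L$ has nonpositive off-diagonal entries and zero row sums, and $A_a+I_n\succ0$ is diagonal, so $M$ is strictly row diagonally dominant with positive diagonal. I would then bound the spectral abscissa of $A_c$ using a logarithmic norm. The simplest choice is the $\infty$-norm $\mu_\infty$. Each row of $A_c$ has diagonal entry $c_{kk}-L_{ii}-a_{ii}-3$ and off-diagonal absolute row sum at most $\sum_{h\neq k}|c_{kh}|+L_{ii}$. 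The difficulty is that $\sum_h|c_{kh}|$ is not bounded by $1$ under Assumption~\ref{ass:C}, since only entrywise $|c_{kh}|\le1$ is assumed.

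This last step is the main obstacle, and it is why I would not use a crude norm bound. The robust route is the one in Lemma~2 of \cite{ye2020continuous}, which uses the semisimplicity of the zero eigenvalue of $A$ and the stability of its other eigenvalues, together with the M-matrix structure of $I_m\otimes M$. Since that lemma is exactly what the paper invokes for $A_{uc}$, I would take $A_{uc}$ Hurwitz as given and conclude with the shift identity above. That identity is a one-line verification from the definitions, so the proof reduces entirely to it.
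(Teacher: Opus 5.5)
Your argument is correct, but it takes a different route from the paper. The identity $A_c=A_{uc}-I_{nm}$ follows directly from the definitions, so $\sigma(A_c)$ is $\sigma(A_{uc})$ shifted by $-1$. Hurwitzness of $A_c$ is then inherited from that of $A_{uc}$, which the paper asserts by citing Lemma~2 of \cite{ye2020continuous}.

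The paper instead argues from scratch. It sets $M:=L+A_a+2I_n$ and writes $A_c=C\otimes I_n-I_m\otimes M$. Gershgorin on the rows of $M$ (radius $\sum_{j\ne i}|L_{ij}|\le L_{ii}$) gives $\Re(\mu)\ge 2+a_{\min}$ on $\sigma(M)$. From Assumption~\ref{ass:C} it uses only that $\Re(\lambda)\le 1$ on $\sigma(C)$. The Kronecker-sum identity $\sigma(A_c)=\{\lambda_i(C)-\mu_j(M)\}$ then bounds the spectral abscissa by $-(1+a_{\min})$.

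Your shift makes it transparent that the relative input form $(U-X)$ adds one unit of damping over the uncontrolled dynamics. However, all of its content sits in the cited lemma. The paper's proof is self-contained and gives the sharper bound $-(1+a_{\min})$ rather than merely $<-1$. With $2I_n$ replaced by $I_n$, the same argument proves the very fact about $A_{uc}$ that you rely on.

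This also shows that your fallback paragraph misidentifies the difficulty. The failure of row-sum (log-norm) bounds is bypassed by working with eigenvalues of the Kronecker sum rather than with norms. Neither semisimplicity of the zero eigenvalue of $C-I$ nor M-matrix theory is needed.

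There is also a slip in that paragraph. The correct rewriting is $A_c=(A\otimes I_n)-I_m\otimes(L+A_a+I_n)$, without the extra $-I_{nm}$. So the diagonal entries are $c_{kk}-L_{ii}-(A_a)_{ii}-2$, not $-3$.
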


\begin{proof}
Let $M:=L+A_a+2I_n$, with $L$ and $A_a$ as in \eqref{eq:model_matrix_noninv}, so that
$A_c=C\otimes I_n - I_m\otimes M$. Notice that, in view of Assumption~\ref{ass:C},
$\max_{\lambda\in\sigma(C)}\Re(\lambda)\leq 1$. For each $i\in[n]$, the $i$-th Gershgorin disc of $M$ is centered at
$M_{ii}=L_{ii}+(A_a)_{ii}+2$ with radius
$R_i=\sum_{j\ne i}|M_{ij}|=\sum_{j\ne i}|L_{ij}|\leq L_{ii}$.
Hence, every $\mu\in\sigma(M)$ satisfies
\begin{align*}
     \Re (\mu)  &\geq \min_{i\in[n]}\big\{(L_{ii}+(A_a)_{ii}+2)-R_i\big\}
 \\ &\geq \min_{i\in[n]}\{(A_a)_{ii}+2\}\ =:\ 2+a_{\min},   
\end{align*}
 where $a_{\min}:=\min_{i\in[n]} (A_a)_{ii}>0$. Take Schur decompositions $C=V T_C V^*$ and $M=Z T_M Z^*$ with $V\in \reals^{m \times m},Z\in \reals^{n \times n}$ unitary and $T_C\in \reals^{m \times m}$, $T_M\in \reals^{n \times n}$ upper triangular. Using the mixed‐product rule and unitary similarity,
$A_c = C\otimes I_n - I_m\otimes M = (V\otimes Z)\,\big(T_C\otimes I_n - I_m\otimes T_M\big)\,(V\otimes Z)^* $.
Thus $A_c$ is unitarily similar to an upper triangular matrix whose diagonal
entries are in $\mathcal{S}:=\{\lambda_i(C)-\mu_j(M): 1\leq i\leq m,\ 1\leq j\leq n\}$, and thus $\sigma(A_c)=\mathcal{S}$, counting algebraic multiplicities. Since $\big(\max_{\lambda\in\sigma(C)} \Re \lambda\big) - \big(\min_{\mu\in\sigma(M)} \Re \mu\big)
        \leq 1 - (2+a_{\min}) = -(1+a_{\min})$, $A_c$ is Hurwitz as claimed.
\end{proof}

We now turn to solving \eqref{eq:prob_full} with $f$ defined in \eqref{eq:aff_dyn}.

\section{Solutions of the LQ problem} 
\label{sec:sols_gen}
In this section, we define an optimization problem that is equivalent to \eqref{eq:prob_full}, and we study when its stage cost is positive (semi)definite under the proposed weight selection. We discuss the taxonomy of the tools applicable to find the solution of the designed optimal control problem from a LQ-theoretical perspective. 

The stage cost in \eqref{eq:J} can be rewritten as
\begin{equation}
\ell(x,u) \;=\; x^\top Q x \;+\; 2\,x^\top N u \;+\; u^\top R u \;+\; 2\,c^\top x,
\label{eq:stage_cost}
\end{equation}
with $Q=Q^{\top}=W\sbs{D}+W\sbs{P} \succeq 0$, $N=N^{\top}=-\tfrac12\,W\sbs{EN} \preceq 0$ and $R=R^{\top}=W\sbs{EX}+\alpha\sbs{F}L_u \succ 0$ being  positive and negative semidefinite (possibly diagonal) matrices of appropriate dimensions, and
\begin{equation*}
c=-W\sbs{D}x_{\mathrm{eq}}.
\end{equation*}
Notice that we have purposely ignored the constant component of the stage cost, which does not affect the optimizer. 

Defining $v:=u+R^{-1}N x$, the control problem \eqref{eq:prob_full} is equivalent to
\begin{align}
   \min_{v}\quad &  \widetilde J=\int_{0}^{\infty} \widetilde \ell(x(t),v(t))\,\dif t, \vspace{3pt}
  \nonumber\\
   \mathrm{s.t.}\quad & \dot x = \widetilde{f}(x,v) ,\ \ x(0)=x_0,  \label{eq:prob_full_alt}
\end{align}
with
\begin{equation}
\label{eq:subs}
    \widetilde A:=A_c - R^{-1}N,\qquad \widetilde Q:=Q-N R^{-1}N,
\end{equation}
and where $\widetilde{f}(x,v):=\widetilde A x + v + d$
and $\widetilde\ell(x,v)=x^\top \widetilde Q x + v^\top R v + 2\,c^\top x$, and with optimal control input $u^\star(x)=v^\star(x)-R^{-1}N x$.

We can then show that the quadratic form 
\begin{equation}
\widetilde\ell_{sq}(x,v) \;:=\; x^\top \widetilde Q x  \;+\; v^\top R v,
\label{eq:stage_cost_qp}
\end{equation}
is sign-definite under simple spectral bounds.
\begin{lemma}
\label{lem:PDP}
Consider $\widetilde\ell_{sq}$ in \eqref{eq:stage_cost_qp}. If
\begin{equation}
\lambda_{m,\mathrm{D}}+\lambda_{m,\mathrm{P}}>\frac{\lambda_{M,\mathrm{EN}}^2}{4 \lambda_{m,\mathrm{EX}}},
\label{eq:cond_L1}
\end{equation}
then the block matrix $\begin{bmatrix}\widetilde Q& O_{nm} \\ O_{nm} &R\end{bmatrix}$and thus $\widetilde\ell_{sq}(x,v)$ in \eqref{eq:stage_cost_qp} are positive definite.
\end{lemma}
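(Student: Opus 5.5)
Since the block matrix is block diagonal, it is positive definite if and only if both diagonal blocks are; the form $\widetilde\ell_{sq}(x,v)=x^\top\widetilde Q x+v^\top R v$ then vanishes only at $(x,v)=(0,0)$. The plan is therefore to show $R\succ 0$ and $\widetilde Q\succ 0$ separately, using only the diagonal structure of the weights and the semidefiniteness of $L_u$.

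\textbf{Lower bound on $R$.} We have $R=W\sbs{EX}+\alpha\sbs{F}L_u$. The matrix $L_b$ is the Laplacian of the undirected graph with symmetric nonnegative adjacency $\Gamma_b$, so $L_b\succeq 0$. Hence $L_u=I_m\otimes L_b\succeq 0$, and with $\alpha\sbs{F}\ge 0$ this gives
\begin{equation*}
R\succeq W\sbs{EX}\succeq \lambda_{m,\mathrm{EX}} I_{nm}\succ 0 .
\end{equation*}
Monotonicity of inversion in the Loewner order then yields $0\prec R^{-1}\preceq \lambda_{m,\mathrm{EX}}^{-1} I_{nm}$.

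\textbf{Upper bound on $NR^{-1}N$.} For any $x$, set $y:=Nx$. Then
\begin{equation*}
x^\top N R^{-1} N x = y^\top R^{-1} y \le \lambda_{m,\mathrm{EX}}^{-1}\|Nx\|^2 .
\end{equation*}
Since $N=-\tfrac12 W\sbs{EN}$ is diagonal, $\|Nx\|^2\le \tfrac14\lambda_{M,\mathrm{EN}}^2\|x\|^2$. This gives
\begin{equation*}
NR^{-1}N\preceq \frac{\lambda_{M,\mathrm{EN}}^2}{4\lambda_{m,\mathrm{EX}}}\, I_{nm}.
\end{equation*}

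\textbf{Conclusion.} Because $W\sbs{D}$ and $W\sbs{P}$ are diagonal, $Q\succeq(\lambda_{m,\mathrm{D}}+\lambda_{m,\mathrm{P}})I_{nm}$. Combining with the previous bound,
\begin{equation*}
\widetilde Q\succeq \Big(\lambda_{m,\mathrm{D}}+\lambda_{m,\mathrm{P}}-\tfrac{\lambda_{M,\mathrm{EN}}^2}{4\lambda_{m,\mathrm{EX}}}\Big)I_{nm},
\end{equation*}
and the right-hand side is positive definite by \eqref{eq:cond_L1}. Together with $R\succ 0$, the block matrix is positive definite.

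The only delicate step is bounding $NR^{-1}N$. The matrix $R$ is not diagonal because of the Laplacian term, so $R^{-1}$ cannot be handled entrywise. This is resolved by relying only on the Loewner bound $R\succeq W\sbs{EX}$, which holds because $L_b\succeq 0$.
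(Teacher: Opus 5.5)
Your proof is correct and follows the paper's argument: you reduce to $\widetilde Q\succ0$ and $R\succ0$ via the block-diagonal structure, use $L_u\succeq0$ to bound $N R^{-1} N$ above by $\frac{\lambda_{M,\mathrm{EN}}^2}{4\lambda_{m,\mathrm{EX}}}I_{nm}$, and bound $Q$ below by $(\lambda_{m,\mathrm{D}}+\lambda_{m,\mathrm{P}})I_{nm}$. You also make explicit the Loewner-monotonicity-of-inversion step that the paper leaves implicit, and you correctly state the intermediate bound on $\widetilde Q$ as non-strict ($\succeq$); the paper writes $\succ$ there, which fails in general (e.g.\ homogeneous weights with $\alpha_{\mathrm{F}}=0$), although its conclusion still holds.
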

\begin{proof} 
Given $R \succ 0$,
\begin{equation}
\label{eq:proof_L1_1}
\begin{bmatrix}\widetilde Q&O_{nm}\\O_{nm}&R\end{bmatrix}\succ 0 \iff \widetilde Q\succ 0 \land R \succ 0.
\end{equation}
$\widetilde Q\succ 0$ translates to, being $L_u \succeq 0$,
\begin{align*}
& \nonumber (W\sbs{D}+W\sbs{P}) - \frac{1}{4}W\sbs{EN}(\alpha\sbs{F}L_u+ W\sbs{EX})^{-1} W\sbs{EN} \\&\succ \Bigg(\lambda_{m,\mathrm{D}}+\lambda_{m,\mathrm{P}}-\frac{\lambda_{M,\mathrm{EN}}^2}{4\lambda_{m,\mathrm{EX}}}\Bigg) I_{nm}
\succ 0,
\end{align*}
and the last inequality is true if \eqref{eq:cond_L1} holds.
\end{proof}

\begin{corollary}
\label{cor:PDP2}
Suppose $Q,R$ and $N$
 are simultaneously orthogonally diagonalizable, that is,
there exist an orthogonal $U\in \reals^{nm \times nm}$ and $q,r\in \reals^{nm}_{>0}$, $w_{\mathrm{EN}}\in \reals^{nm}_{\geq 0} $ such that
$Q=U\mathrm{diag}(q)U^\top$, $R=U\mathrm{diag}(r)U^\top$,
$W_{\mathrm{EN}}=U\mathrm{diag}(w_{\mathrm{EN}})U^\top$.
Then
\begin{equation*}
\begin{bmatrix}\widetilde Q & O_{nm}\\O_{nm} & R\end{bmatrix}\succ0
\iff
q_i \;>\; \frac{w_{\mathrm{EN},i}^2}{4\,r_i}\ \ \forall i\in [nm].
\end{equation*}
\end{corollary}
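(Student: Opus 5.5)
The plan is to reduce everything to scalar conditions in the common eigenbasis. As in \eqref{eq:proof_L1_1}, the block-diagonal matrix is positive definite if and only if both diagonal blocks are. Here $R=U\mathrm{diag}(r)U^\top$ with $r\in\reals^{nm}_{>0}$, so $R\succ0$ holds automatically. The equivalence therefore reduces to showing
\[
\widetilde Q\succ 0 \iff q_i> \frac{w_{\mathrm{EN},i}^2}{4r_i}\quad \forall i\in[nm].
\]

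Next I compute $\widetilde Q$ explicitly in the common basis. Recall $N=-\tfrac12 W\sbs{EN}=U\mathrm{diag}(-\tfrac12 w_{\mathrm{EN}})U^\top$. Since $U$ is orthogonal, $R^{-1}=U\mathrm{diag}(r)^{-1}U^\top$, and products of matrices sharing the eigenbasis $U$ collapse to products of diagonal matrices. Hence
\[
\widetilde Q = Q-NR^{-1}N = U\,\mathrm{diag}\Big(q_i-\frac{w_{\mathrm{EN},i}^2}{4r_i}\Big)_{i\in[nm]}U^\top .
\]
This is an orthogonal similarity transformation of a diagonal matrix. Its eigenvalues are therefore exactly $q_i-w_{\mathrm{EN},i}^2/(4r_i)$, and $\widetilde Q\succ0$ if and only if each of them is positive. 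That is precisely the stated condition, and it gives both directions of the equivalence at once.

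No step is a real obstacle. The only point needing care is to check that $N$ is indeed diagonalized by the same $U$: this follows directly from the hypothesis on $W\sbs{EN}$ via $N=-\tfrac12W\sbs{EN}$. One should also note that $R^{-1}$ inherits the eigenbasis of $R$, so the triple product $NR^{-1}N$ is diagonal in that basis. Unlike Lemma~\ref{lem:PDP}, no Loewner bounds with extremal eigenvalues are needed. The exact diagonalization is what makes the condition necessary as well as sufficient.
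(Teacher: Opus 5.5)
Your proposal is correct and follows essentially the same route as the paper: you reduce the block condition to $\widetilde Q\succ0$, with $R\succ0$ automatic, then write $\widetilde Q=U\,\mathrm{diag}\big(q_i-w_{\mathrm{EN},i}^2/(4r_i)\big)U^\top$ in the common orthogonal eigenbasis and read off positivity entrywise. You are slightly more explicit than the paper about why $N=-\tfrac12 W_{\mathrm{EN}}$ and $R^{-1}$ share the basis $U$, which is the only step that needs checking.
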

\begin{proof}
Condition \eqref{eq:proof_L1_1} can be obtained by Schur complement. The right-hand side of \eqref{eq:proof_L1_1} can be rewritten equivalently, being simultaneously orthogonally diagonalizable, as $ U(\diag(q) - \frac{1}{4}(\diag(w_{\mathrm{EN}})(\mathrm{diag}(r))^{-1} (\diag(w_{\mathrm{EN}}) )U^\top \succ 0$, which translates to the
 condition $q_i \;>\; \frac{w_{\mathrm{EN},i}^2}{4\,r_i}$, for all $i\in [nm]$.
\end{proof}

\begin{remark}
If the weights are homogeneous for each topic, i.e., $Q = Q_t \otimes I_n, W_{\mathrm{EN}} = W_t \otimes I_n, R = R_t \otimes I_n + \alpha_{\mathrm{F}}\, I_m \otimes L_{\mathrm{sym}}$,
then $Q$, $W_{\mathrm{EN}}$, and $R$ pairwise commute since it holds that, given $A\in\reals^{m\times m}, \, B\in\reals^{n\times n}$, $(A\otimes I_n)(I_m\otimes B) = A\otimes B = (I_m\otimes B)(A\otimes I_n)$,
which in turn it implies that they are simultaneously orthogonally diagonalizable. Matrices $Q$, $R$, and $N$ are also simultaneously orthogonally diagonalizable when $J_F=0$
\end{remark}

\begin{remark}
\label{rem:semi}
Lemma~\ref{lem:PDP} and Corollary~\ref{cor:PDP2} can be restated with nonstrict inequalities in the bounds to obtain conditions for positive semidefiniteness of $\widetilde\ell_{sq}$. Indeed, $\diag(\widetilde Q,R)\succeq 0$ if, and only if $\widetilde Q\succeq 0$ and $R\succ 0$, with $R \succ 0$ being a standing assumption. Thus, replacing the strict bounds in Lemma~\ref{lem:PDP} and Corollary~\ref{cor:PDP2} by the corresponding nonstrict versions yields sufficient, and necessary and sufficient conditions, respectively, for $\widetilde\ell_{sq}$ to be nonnegative definite.
\end{remark}

Since $B=I_{nm}$, the pair $(\widetilde A,B)$ is controllable, and the infinite-horizon behavior is governed by the sign structure of the quadratic form $\widetilde\ell_{sq}$ in \eqref{eq:stage_cost_qp}, in line with classical LQR theory~\citep{anderson2007optimal,trentelman2002control}. We call the problem \emph{free-endpoint} when the limiting state is not prescribed, and \emph{fixed-endpoint} when convergence to a prescribed equilibrium, such as $x_{\mathrm{eq}}$ or the origin, is imposed. In the latter case, choosing a stabilizing feedback amounts to enforcing asymptotic convergence to the prescribed equilibrium.

If \(\widetilde\ell_{sq}\) is strictly positive definite, standard infinite-horizon LQ theory gives a unique stabilizing Riccati solution and a stabilizing optimal feedback. Hence a fixed-endpoint condition is redundant: convergence is already enforced by the objective. If \(\widetilde\ell_{sq}\) is nonnegative, stabilization additionally requires the unpenalized modes to be detectable, e.g., through \((\widetilde Q^{1/2},\widetilde A)\). Since \(A_c\) is Hurwitz by Proposition~\ref{prop:Ac_Hur}, sufficiently small engagement weights \(W_{\mathrm{EN}}\), or sufficiently large exposure weights in \(R\), keep \(\widetilde A=A_c-R^{-1}N\) Hurwitz by perturbation arguments~\cite[Ch.~VI]{horn2012matrix}. Thus the semidefinite case remains benign under mild engagement rewards whenever the nonstrict versions of Lemma~\ref{lem:PDP} and Corollary \ref{cor:PDP2} give \(\widetilde Q\succeq0\).

When \(\widetilde Q\) is indefinite, the situation changes. The unconstrained infinite-horizon problem may have no optimal input, both in fixed- and free-endpoint formulations~\cite[Thms.~4.1--5.1]{trentelman1989regular}. Even when a minimizer exists, it need not be stabilizing. In that regime, negative or unpenalized directions may be exploited to lower the cost, so convergence must be imposed externally rather than following from minimization of \(\widetilde J\) in \eqref{eq:prob_full_alt}.
For recommendation design, this means that the objective no longer aligns engagement, polarization avoidance, and stability. We therefore use Lemma~\ref{lem:PDP} and Corollary \ref{cor:PDP2} as design constraints on \((W_{\mathrm{D}},W_{\mathrm{P}},W_{\mathrm{EN}},W_{\mathrm{EX}},\alpha_{\mathrm{F}})\), restricting engagement rewards to values compatible with deviation, polarization, and exposure regularization. The following section illustrates the resulting failure modes.
\section{Free-endpoint LQ solutions: semidefinite and sign-indefinite cases}
\label{sec:problems}
When $\widetilde\ell_{sq}$ is not positive definite, we use the free-endpoint theory of~\citet{trentelman1989regular}. For simplicity, the examples below take $d=c=0\cdot\mathbf{1}_{nm}$. The relevant algebraic Riccati equation is
\begin{equation}
\label{eq:ARE-min}
\widetilde A^\top P+P\widetilde A-PR^{-1}P+\widetilde Q=0.
\end{equation}
In the semidefinite case,~\cite[Thm~4.2]{trentelman1989regular} yields the unique free-endpoint optimizer $v^\star=-R^{-1}P_\circ x$, where $P_\circ$ is the smallest positive-semidefinite solution of \eqref{eq:ARE-min}.

In the sign-indefinite case, let $P_-$ and $P_+$ denote the minimal antistabilizing and maximal stabilizing symmetric solutions of \eqref{eq:ARE-min}, respectively, and define
\[
A_-:=\widetilde A-R^{-1}P_-,\qquad
\mathcal N:=(\ker P_-\mid A_- )\cap\mathcal X^+(A_-),
\]
where $(\ker P_-\mid A_-)$ is the largest $A_-$-invariant subspace contained in $\ker P_-$ and $\mathcal X^+(A_-)$ is the $A_-$-invariant subspace associated with eigenvalues having nonnegative real part. If $\Pi_{\mathcal N}$ is the corresponding $A_-$-invariant projector, the supported curvature is
\begin{equation}
\label{eq:P-sup}
P_{\mathcal N}:=P_-\Pi_{\mathcal N}+P_+(I_{nm}-\Pi_{\mathcal N}).
\end{equation}
A finite value exists if an antistabilizing $P_-$ exists. Moreover, an optimal input exists for every initial state if and only if $\ker(P_+-P_-)\subset\ker P_-$; in that case
\[
v^\star(x)=-R^{-1}P_{\mathcal N}x.
\]
In the following examples, we show how having ill-posed weights in the definition of the performance index \eqref{eq:J} might lead to undesirable behaviors, such as unbounded growth of opinions or amplification of disagreement, or even the absence of an optimal recommendation input.

\begin{example}
Consider \eqref{eq:prob_full_alt} with $n=1$, $m=2$, $A_a=1$, $L=0$, $d=c=0$, $R=I_2$, $N=\diag(-2-\eta,-5/2+\eta)$, $Q=\diag\big((2+\eta)^2,(-5/2+\eta)^2-\beta\big)$, and 
\[
C=\begin{bmatrix}1&\xi\\0&1/2\end{bmatrix},
\]
where $0<\eta<5/2$, $\xi\neq0$, and $0<\beta<\eta^2$. Then
\[
\widetilde A=\begin{bmatrix}\eta&\xi\\0&-\eta\end{bmatrix},\qquad
\widetilde Q=\diag(0,-\beta),
\]
so the stage cost is sign-indefinite. Let $\Delta:=\sqrt{\eta^2-\beta}$. Expanding \eqref{eq:ARE-min} for $P=P^\top=[P_{ij}]$ gives
\begin{equation}
\label{ex1:ARE-system-short}
\begin{cases}
2\eta P_{11}-(P_{11}^2+P_{12}^2)=0,\\
\xi P_{11}-P_{12}(P_{11}+P_{22})=0,\\
2\xi P_{12}-2\eta P_{22}-(P_{12}^2+P_{22}^2)-\beta=0.
\end{cases}
\end{equation}
The diagonal branch of \eqref{ex1:ARE-system-short} gives $P=\diag(0,-\eta\pm\Delta)$. The minimal antistabilizing solution is therefore  $P_- =\diag(0,-\eta-\Delta)$,
because $A_-:=\widetilde A-P_-$ has eigenvalues $\eta$ and $\Delta$, both positive. The remaining, off-diagonal branch of \eqref{ex1:ARE-system-short} contains the stabilizing extremal solution $P_+$. Since $\ker P_-=\mathrm{span}(e_1)$ and $A_-e_1=\eta e_1$, we get $\mathcal N=\mathrm{span}(e_1)$. Hence $P_{\mathcal N}$ in \eqref{eq:P-sup} gives a unique free-endpoint feedback, but the closed loop acts as $A_-$ on $\mathcal N$ and therefore retains the unstable eigenvalue $\eta>0$. Thus the free-endpoint optimizer exists, but it is not stabilizing.
\end{example}

\begin{example}
Consider \eqref{eq:prob_full_alt} with $n=1$, $m=2$, $A_a=1$, $L=0$, $d=c=0$, $R=I_2$, $C=I_2$, $N=-3I_2$ and $Q=\diag(8,10)$.
Then $\widetilde A=I_2$ and $\widetilde Q=\diag(-1,1)$. For $P=P^\top=[P_{ij}]$, equation~\eqref{eq:ARE-min} becomes
\begin{equation}
\label{ex2:ARE-system-short}
\begin{cases}
-P_{11}^2+2P_{11}-P_{12}^2-1=0,\\
P_{12}(2-P_{11}-P_{22})=0,\\
-P_{22}^2+2P_{22}-P_{12}^2+1=0.
\end{cases}
\end{equation}
The case $P_{12}\neq0$ gives no real solution after eliminating $P_{22}$, so $P_{12}=0$. Thus $P_{11}=1$ and $P_{22}\in\{1-\sqrt2,1+\sqrt2\}$, namely $P_- =\diag(1,1-\sqrt2)$, and
$P_+ =\diag(1,1+\sqrt2)$. However, $\ker(P_+-P_-)=\mathrm{span}(e_1)\not\subset\ker(P_-)=\{0\}$.
Therefore the attainability condition of~\cite[Thm.~5.1]{trentelman1989regular} fails. For initial states with a nonzero component along $e_1$, the infimum is finite but no admissible input attains it.
\end{example}

\begin{example}
Consider \eqref{eq:prob_full_alt} with $n=1$, $m=2$, $A_a=1$, $L=0$, $d=c=0$, $R=I_2$, and
\[
C=\begin{bmatrix}1&\xi\\ \xi&1\end{bmatrix},\qquad
N=-\eta I_2,\qquad Q=\eta^2 I_2,
\]
where $\eta>2+|\xi|$ and $\xi\neq0$. Then
\[
\widetilde A=\begin{bmatrix}-2+\eta&\xi\\ \xi&-2+\eta\end{bmatrix},\qquad
\widetilde Q=O_2.
\]
Thus $\widetilde\ell_{sq}$ is positive semidefinite. Let
\[
T=\frac{1}{\sqrt2}\begin{bmatrix}1&1\\1&-1\end{bmatrix},\qquad
\widetilde A=T\diag(\lambda_1,\lambda_2)T^\top,
\]
with $\lambda_{1,2}=-2+\eta\pm |\xi|>0$. In these coordinates, \eqref{eq:ARE-min} reduces to two scalar equations, and the minimal positive-semidefinite solution is $P_\circ=O_2$. Hence the free-endpoint optimizer is $v^\star=0$. Since both eigenvalues of $\widetilde A$ are positive, the pair $(\widetilde Q^{1/2},\widetilde A)$ is not detectable and the uncontrolled unstable modes are invisible to the cost. Consequently, the optimal input does not counteract the opinion dynamics, and the state diverges.
\end{example}

The three examples illustrate complementary pathologies that arise outside the cone characterized by Lemma~\ref{lem:PDP} and Corollary~\ref{cor:PDP2}. In Example~1, the stage cost is sign-indefinite and the free-endpoint optimizer exists, but the resulting feedback leaves an unstable mode unregulated. This behavior is visible in the top image of  Fig.~\ref{fig:ex1-ex3}, where the corresponding closed-loop trajectory diverges. In Example~2, the optimal value of the cost is finite on a nontrivial set of initial conditions, but the infimum is not attained. In Example~3, the cost is positive semidefinite rather than indefinite, but detectability fails. In particular, the optimizer is the zero input, and the unstable state diverges, as also shown in the bottom image of  Fig.~\ref{fig:ex1-ex3}. Taken together, these cases show that rewarding engagement beyond the regularization supplied by polarization, deviation, mismatch, and exposure penalties can make the LQ synthesis either nonattainable or formally optimal but behaviorally incompatible with the design objective.

\begin{figure}[t]
    \centering
    \includegraphics[width=0.84\linewidth, trim=0 5pt 0 8pt , clip]{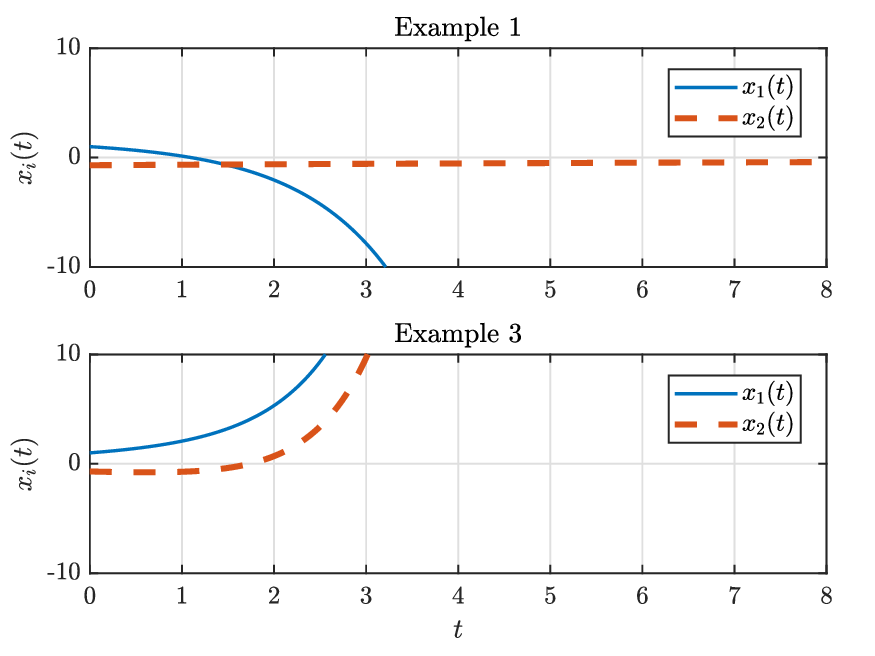}
     \vskip -12pt
    \caption{\vspace{-1pt} Closed-loop trajectories in Examples~1 and~3.}
    \label{fig:ex1-ex3}
\end{figure}

\section{Discussion and conclusion}
\label{sec:disc}

The main contribution of this work is to combine explicit opinion-dynamics modeling with closed-loop optimal control in order to derive spectral conditions for the well-posedness of recommendation design.
Our analysis shows that Lemma~\ref{lem:PDP} and Corollary \ref{cor:PDP2} should be interpreted as design constraints on the weights of the performance index. These conditions restrict the engagement reward relative to the penalties on polarization, deviation from baseline, and control effort. If engagement is over-weighted, the effective quadratic form becomes semidefinite or sign-indefinite, and the design enters a free-endpoint LQ regime in which stability is no longer guaranteed by the cost itself. Stability must then be imposed externally, which is undesirable: the objective no longer encodes the intended interaction between users and the platform.

The present work has several limitations that we would like to discuss.  
Some limitations originate from  our assumptions in modelling the social network and the interactions between the users and the recommendation system. Indeed, we assumed that the recommendation system can access full-state feedback and the social network has a known, time-invariant interaction structure, encoded by $L$, $C$, and $A_a$. In practice, these quantities must be estimated from sparse and noisy data and may be time-varying or uncertain. Extensions to partial observation, output feedback, uncertainty, and recursive model learning are therefore natural next steps.

Other limitations concern the class of recommendation designs.
Indeed, we considered quadratic penalties, static control laws, and unconstrained inputs. This controller class is analytically tractable and allows us to derive explicit Riccati-based conditions, but may be too narrow for the performance needs of recommendation systems. 
Furthermore, constraints on exposure, fairness, safety, item availability, and input variety cannot be captured by unconstrained static feedback laws. Enforcing such requirements may call for constrained or nonlinear formulations, dynamic controllers with memory, and time or state-dependent parameters in the cost objectives.

\bibliography{Bib}

\end{document}